\def\notes{0} 
\newcommand{\E}{\mathbb{E}}
\newcommand{\bern}{\text{Bern}}
\newif\ifevanfancy\evanfancytrue
\newif\ifevanhdr\evanhdrtrue
\newif\ifevanhref\evanhreftrue
\newif\ifevansetup\evansetuptrue
\newif\ifevanthm\evanthmtrue
\newif\ifevansecthm\evansecthmfalse
\newif\ifevanht\evanhtfalse
\newif\ifevanpkg\evanpkgtrue
\newif\ifevanpdf\evanpdftrue
\newif\ifevantitling\evantitlingtrue
\newif\ifevanauthor\evanauthortrue
\newif\ifevanchinese\evanchinesefalse
\newif\ifevanmdthm\evanmdthmfalse
\newif\ifevandiagrams\evandiagramsfalse
\newif\ifevanpatchasy\evanpatchasyfalse
\newif\ifevanhints\evanhintsfalse
\newif\ifevanasy\evanasytrue
\newif\ifevancolorsec\evancolorsecfalse
\newif\ifevantitlemark\evantitlemarktrue
\newcommand{\eps}{\varepsilon}
\newcommand{\BP}{\mathbb P}
\newtheorem{theorem}{Theorem}[section]
\newtheorem{definition}[theorem]{Definition}
\newtheorem{proposition}[theorem]{Proposition}
\newtheorem{lemma}[theorem]{Lemma}
\newtheorem{claim}[theorem]{Claim}
\newcommand{\mnote}[1]{{\bf \color{red} Madhu: #1}}
\newcommand{\dnote}[1]{{\bf \color{red} David: #1}}
\newcommand{\snote}[1]{{\color{purple} Small Note: #1}}
\newcommand{\mnote}[1]{}
\newcommand{\dnote}[1]{}
\newcommand{\snote}[1]{}
\newcommand{\calm}{\mathcal{M}}
\newcommand{\Huff}{\mathrm{Huff}}
\newcommand{\IH}{\mathrm{IH}}
\newcommand{\LZ}{\mathrm{LZ}}
\title{A Self-contained Analysis of the Lempel-Ziv Compression Algorithm}
\author{Madhu Sudan\thanks{Harvard John A. Paulson School of Engineering and Applied Sciences, Harvard University, 33 Oxford Street,
Cambridge, MA 02138, USA. {\tt madhu@cs.harvard.edu}. Work supported in part by a Simons Investigator Award and
NSF Award CCF 1715187.} \and David Xiang\thanks{Harvard College. {\tt davidxiang@college.harvard.edu}}}
\date{October 2, 2019}
\begin{document}

\maketitle

\begin{abstract}
This article gives a self-contained analysis of the performance of the Lempel-Ziv compression algorithm on (hidden) Markovian sources. Specifically we include a full proof of the assertion that the compression rate approaches the entropy rate of the chain being compressed.
\end{abstract}

\section{Introduction}

In the late 1970's Abraham Lempel and Jacob Ziv~\cite{LZ1,LZ2,LZ3} gave some extremely simple, clever and efficient algorithms that were able to {\em universally} compress outputs of ``nice'' stochastic processes down to their entropy rate. While their algorithms are well-known and understood, the analysis of their algorithms is not widely understood. The aim of this article is to remedy this situation by providing a self-contained statement and analysis of their algorithm for the special case of ``(hidden) Markov models''. Our primary hope is that this article can form the basis of lectures in undergraduate courses that teach this algorithm along with analysis to students across a broad spectrum of disciplines. In particular our analysis depends only on elementary discrete probability theory (as used in say~\cite{MitzUpf,MotRagh}), basic facts about Markov chains (e.g., \cite[Chapter 1]{LevinPW}) and elementary information theory~\cite[Chapter 2]{CoverThomas}. 
The proofs here are essentially the same as those in the original articles though the actual exposition is from scratch and the specific analysis we use here goes back to unpublished notes of Bob Gallager from the 1990s~\cite{RG90}. In particular we owe our understanding of the overview of the entire analysis, as well as most of the specific notions and claims of Section~\ref{sec:LZ-anal}, to these notes.

We now turn to stating the main theorem we wish to prove. In Section~\ref{ssec:def} we introduce some the basic terminology that will allow us to state the main result. Specifically we recall the notion of a finite state Markov chain, define a hidden Markov model, and define its entropy rate. We also define what it means for a compression algorithm to be universal (for the class of hidden Markov models). In Section~\ref{ssec:lz-alg} we then give a version of the Lempel-Ziv algorithm. And in Section~\ref{ssec:main-thm} we state the main theorem about this algorithm, namely that the version of the Lempel-Ziv algorithm we describe is universal for hidden Markov models.

\subsection{Definitions}\label{ssec:def}

A random sequence $Z_0,Z_1,Z_2,\ldots$ with $Z_t \in [k]$ is said to be a (time-invariant) finite state Markov chain if there is a $k \times k$ matrix $M$ such that for every $t\geq 0$ and $i,j \in [k]$ it is the case that $\Pr[Z_{t+1} = j | Z_t = i] = M_{ij}$. Such a Markov chain $M$ is said to be specified by the matrix $M$ and the distribution $\Pi_0 \in \Delta([k])$ of the random variable $Z_0$. We say that $M$ is a $k$-state Markov chain.

The Markov chain is thus essentially specified by a weighted directed graph (possibly with self loops) corresponding to the matrix $M$. The chain is said to be irreducible if the underlying graph is strongly connected. Equivalently a chain $Z_0,\ldots,Z_t,\ldots$ is irreducible if for every $i,j\in[k]$ there exists $t$ such that $\Pr[Z_t = j | Z_0 = i] > 0$. Similarly a Markov chain is said to be aperiodic if greatest common divisor of the cycle lengths in the underlying graphs is $1$. Formally we say that the chain $Z_0,Z_1,Z_2,\ldots$
has a cycle of length $\ell>0$ if there exists $t$ and state $i$ such that $\Pr[Z_{t+\ell} = i | Z_t = i] > 0$. The chain is said to be aperiodic if the greatest common divisor of cycle lengths is $1$. 

Throughout this article we will consider only irreducible aperiodic Markov chains. (The study can be extended to the periodic case easily, but the irreducible case is actually different.) Irreducible and aperiodic chains have a unique stationary distribution $\Pi$ (satisfying $\Pi = \Pi \cdot M$).

\begin{definition}[Hidden Markov Models] A sequence $X_0,X_1,X_2,\ldots,$ with $X_t \in \Sigma$ is said to be a {\em Hidden Markov Model} if there exists a $k$-state Markov chain $Z_0,Z_1,Z_2,\ldots$, specified by matrix $M$ and initial distribution $\Pi_0$, and $k$ distributions $P^{(1)},\ldots,P^{(k)} \in \Delta(\Sigma)$ such that $X_t \sim P^{(Z_t)}$. 
We use the notation $\calm$ to denote the ingredients specifying a hidden Markov model, namely the tuple $(k,M,\Pi_0,P^{(1)},\ldots,P^{(k)})$. 
\end{definition}

\begin{definition}[Entropy Rate] The entropy rate of a hidden Markov model $\calm$ with output denoted $X_0,X_1,\ldots,$ is given by $\lim_{t \to \infty} \left\{\frac1 t \cdot H(X_0,\ldots,X_{t-1})\right\}$ when the limit exists. The entropy rate is denoted $H(\calm)$.
\end{definition}

\begin{proposition}
The entropy rate (i.e., the limit) always exists for irreducible, aperiodic, Markov chains. Furthermore it does not depend on the initial distribution $\Pi_0$.
\end{proposition}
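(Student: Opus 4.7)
The plan is to first handle the case $\Pi_0 = \Pi$ (under which the process $(X_t)$ is strictly stationary) and then extend to arbitrary $\Pi_0$ via a mixing argument. For the stationary case, by the chain rule $H(X_0,\ldots,X_{t-1}) = \sum_{s=0}^{t-1} a_s$ where $a_s := H(X_s \mid X_0,\ldots,X_{s-1})$, and the key observation is that $(a_s)$ is non-increasing:
\[
a_{s+1} = H(X_{s+1} \mid X_0,\ldots,X_s) \le H(X_{s+1} \mid X_1,\ldots,X_s) = H(X_s \mid X_0,\ldots,X_{s-1}) = a_s,
\]
where the inequality is ``conditioning reduces entropy'' (dropping $X_0$) and the final equality uses that $(X_1,\ldots,X_{s+1})$ and $(X_0,\ldots,X_s)$ have the same joint law by stationarity of $(X_t)$ (which in turn follows from $\Pi M = \Pi$ together with $X_t \sim P^{(Z_t)}$). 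Since $a_s \ge 0$, the sequence converges to some $h^* \ge 0$, and Ces\`aro averaging gives $\tfrac{1}{t} H(X_0,\ldots,X_{t-1}) = \tfrac{1}{t} \sum_{s<t} a_s \to h^*$.

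For arbitrary $\Pi_0$, I would condition on the initial hidden state:
\[
H_{\Pi_0}(X_0,\ldots,X_{t-1}) = \sum_z \Pi_0(z)\, g_t(z) + I_{\Pi_0}(Z_0 ; X_0,\ldots,X_{t-1}),
\]
where $g_t(z) := H(X_0,\ldots,X_{t-1} \mid Z_0 = z)$ is a quantity determined by $\calm$ alone (not by $\Pi_0$), and the mutual information term lies in $[0, \log k]$. Thus $\tfrac{1}{t} H_{\Pi_0}(X_0,\ldots,X_{t-1})$ and $\sum_z \Pi_0(z)\, g_t(z)/t$ differ by $O(1/t)$; applying the identity with $\Pi_0 = \Pi$ and combining with the stationary-case result yields $\sum_z \Pi(z)\, g_t(z)/t \to h^*$. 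It therefore suffices to prove that $g_t(z)/t \to h^*$ \emph{separately} for every state $z$: the same identity then gives existence of the entropy rate under any $\Pi_0$ and shows that the limit is $h^*$ independently of $\Pi_0$.

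The main obstacle is this last step. The natural approach is a burn-in/coupling argument: fix $N$ so that $(M^N)_{z,\cdot}$ is within total-variation distance $\epsilon$ of $\Pi$ for every $z$ (possible by irreducibility and aperiodicity), and split
\[
g_{t+N}(z) = H(X_0,\ldots,X_{N-1} \mid Z_0 = z) + H(X_N,\ldots,X_{t+N-1} \mid X_0,\ldots,X_{N-1},\, Z_0 = z).
\]
The first summand is at most $N \log|\Sigma| = O(1)$. For the second, the conditional law of $(X_N,\ldots,X_{t+N-1})$ given the prefix and $Z_0 = z$ is governed by the (random) posterior of $Z_N$; averaged over the prefix observations, the marginal of $Z_N$ given $Z_0 = z$ is exactly $(M^N)_{z,\cdot}$, which is close to $\Pi$. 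A clean finish couples a hidden chain started from $z$ with one started from $\Pi$ via the standard Markov-chain coupling, couples the emissions so that $X_t = X'_t$ once the hidden states meet, and bounds $|g_t(z) - H_\Pi(X_0,\ldots,X_{t-1})|$ by a quantity controlled by the expected coupling time, which is finite by irreducibility and aperiodicity. The delicacy, and main obstacle, is that conditioning on a particular observation sequence can in principle slow the mixing of the posterior over $Z_N$, so the argument must pass through an averaging/concavity step rather than a pointwise one.
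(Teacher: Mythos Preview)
Your stationary-case argument (the sequence $a_s = H(X_s \mid X_{<s})$ is nonincreasing, then Ces\`aro averaging) is exactly what the paper does. For the ``furthermore'' part your route genuinely differs. The paper burns in for $T$ steps and then bounds $|H(X_T,\ldots,X_n) - H(X'_T,\ldots,X'_n)|$ (arbitrary start versus stationary start) term by term using the standard inequality relating statistical distance to entropy difference; summing gives an error of order $n\epsilon\log|\Sigma|$, which after dividing by $n$ and sending $\epsilon\to 0$ vanishes. You instead condition on $Z_0$, reduce to showing $g_t(z)/t \to h^*$ for each state $z$, and propose coupling. That works and is clean: if the two hidden chains couple at time $\tau$ and the emissions are set equal thereafter, then $H(X_0,\ldots,X_{t-1}\mid X'_0,\ldots,X'_{t-1}) \le H(\tau) + \E[\tau]\log|\Sigma|$ (reveal $\tau$, then the at most $\tau$ symbols on which the sequences can differ), and symmetrically, so $|g_t(z) - H_\Pi(X_0,\ldots,X_{t-1})| = O(1)$ uniformly in $t$. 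Your stated ``main obstacle''---that the posterior over $Z_N$ given the observations may mix slowly---is a concern only for the burn-in/posterior route you sketch first, not for the coupling argument, which never conditions on observations; you are worrying about a difficulty your own proof already sidesteps. The paper's approach is more elementary (no coupling construction needed), while yours yields a sharper $O(1)$ rather than $o(n)$ discrepancy and avoids the term-by-term bookkeeping.
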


\begin{proof}
We start with the ``furthermore'' part. This part follows from the fact that irreducible aperiodic chains converge to their stationary distribution, i.e., for every $\eps > 0$  there exists $T = T(\eps)$ such that for all $Z_0$ and every $t \geq T_0$ the distribution of $X_t$ is $\epsilon$-close (in statistical distance) to the stationary probability distribution. Now consider $n$ sufficiently large.
We have $$H(X_t,\ldots,X_n) \leq H(X_0,\ldots,X_n) \leq H(X_t,\ldots,X_n) + t\log |\Omega|.$$ Now, we'll compare $H(X_t, \dotsc X_n) = H(X_t) + \sum_{k=t}^{n-1} H(X_{k+1}|X_k)$ to $H(X'_t, \dotsc X'_n) = H(X'_t) + \sum_{k=t}^{n-1} H(X'_{k+1}|X'_k)$, where $X'_t, \dotsc X'_n$ are obtained from starting with $X'_t$ as the stationary probability distribution. We first bound the distance $|H(X'_{k+1}|X'_k)-H(X_{k+1}|X_k)|$ by noting 
\begin{align*}
    |H(X'_{k+1}|X'_k)-H(X_{k+1}|X_k)| &= |\sum_{i \in \Sigma} \Pr(X'_k=i)H(X'_{k+1}|X'_k=i)-\Pr(X_k=i)H(X_{k+1}|X_k=i)|\\
    &\leq \sum_{i\in \Sigma} |\Pr(X'_k=i)- \Pr(X_k=i)|H(X_{k+1}|X_k=i)\\
    &\leq 2\epsilon \log |\Omega|
\end{align*}
Now the distance $|H(X'_t)-H(X_t)| $ can be controlled by standard bounds which relate statistical difference to maximal entropy difference. To be exact, for $X'_t, X_t$ $\epsilon$-close in statistical difference, 
$$
|H(X'_t)-H(X_t)| \leq H(\epsilon) + \epsilon \log |\Omega|
$$
Combining these bounds, we have that $|H(X_t, \dotsc X_n) - H(X'_t, \dotsc X'_n)| \leq 2n\epsilon \log|\Omega|$ so that 
$$
H(X'_t, \dotsc X_n) - 2n\epsilon \log |\Sigma| \leq H(X_0 \dotsc X_n) \leq H(X'_t, \dotsc X'_n) + t\log|\Omega| + 2n\epsilon \log |\Omega|
$$
Dividing by $n$ and taking the limit as $n\to \infty$ then shows that the entropy rate of a Markov model, regardless of the initial distribution converges to the same value. 
\newline 
The first part of the proposition now follows from the fact that when starting from the stationary probability distribution $\Pi$, we have $H(X_t | X_{<t}) \leq H(X_{t-1}| X_{< t-1})$.
More precisely, assuming $Z_0 \sim \Pi$ we have $Z_1 \sim \Pi$ and so $(X_1,\ldots,X_t)$ is distributed identically to $(X_0,\ldots,X_{t-1})$. We also have $H(X_t | X_0,\ldots,X_{t-1}) \leq H(X_t | X_1,\ldots,X_{t-1}) = H(X_{t-1}| X_1,\ldots,X_{t-1})$. (The inequality comes from ``conditioning does not increase entropy" and the equality comes from the identity of the distributions). 
It follows that the sequence $v_t$, $v_t \triangleq \frac1t \cdot H(X_0,\ldots,X_t)$ is a non-increasing sequence in the interval $[0,\log|\Sigma|]$ and so the limit $\lim_{t \to \infty} \{v_t\}$ exists.
\end{proof}

\begin{definition}[Good Compressor, Universality] For $\epsilon,\delta > 0$, a compression algorithm $A:\Sigma^* \to \{0,1\}^*$ is an {\em $(\epsilon,\delta)$-Good Compressor} for a hidden Markov model $\calm$ of entropy rate $H(\calm)$ if there exists $n_0$ such that for all $n \geq n_0$ we have
$$\Pr_{X_0,\ldots,X_{n-1}} [ |A(X_0,\ldots,X_{n-1})| \leq H(\calm)\cdot (1 +\epsilon)\cdot n ] \geq 1 - \delta.$$
We say that $A$ is 
{\em Universal} (for the class of HMMs) if it is an $(\epsilon,\delta)$-good compressor for every hidden Markov model $\calm$ and every $\epsilon,\delta > 0$. 
\end{definition}

\subsection{The Lempel-Ziv Algorithm}\label{ssec:lz-alg}

\newcommand{\mbin}{\mathrm{bin}}
We describe the algorithm structurally below. First we fix a prefix free encoding of the positive integers denoted $[\cdot]$ such that for every $i$, $|[i]| \leq \log i + 2\log\log i$. We also fix an arbitrary binary encoding of $\Sigma \cup \{\lambda\}$, denoted $\mbin$ satisfying $\mbin(b) \in \{0,1\}^\ell$ where 
$\ell = \lceil \log (|\Sigma|+1) \rceil$.

Given $\bar{X} = X_0,\ldots,X_{n-1}$, write $\bar{X} = \sigma_0 \circ \sigma_1 \circ \cdots \circ \sigma_m$ where $\sigma_i \in \Sigma^*$ satisfy the following properties: 
\begin{enumerate}
    \item[(1)] $\sigma_0 = \lambda$ (the empty string)
    \item[(2)] For $i\in [m-1]$, $\sigma_i$ is the unique string such that
    \begin{enumerate}
        \item[(2a)] $\sigma_i \ne \sigma_{i'}$ for $i' < i$
        \item[(2b)] $\sigma_i = \sigma_{j_i} \circ b_i$ for some $j_i < i$ and $b_i \in \Sigma$.
    \end{enumerate}
    \item[(3)] Finally, for $\sigma_m$, we require $\sigma_m = \sigma_{j_m} \circ b_m$ where $b_m \in \Sigma \cup \{\lambda\}$.
\end{enumerate}

The Lempel-Ziv encoding $\LZ(\bar{X})$ is then $ [i_1]\circ \mbin(b_1) \circ  [i_2] \circ \mbin(b_2) \circ \cdots [i_m] \circ \mbin(b_m)$. 


\subsection{Main Theorem}\label{ssec:main-thm}

\begin{theorem}\label{thm:main} The Lempel-Ziv Algorithm is universal for the class of Hidden Markov Models.
\end{theorem}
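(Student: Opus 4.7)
I would show that $|\LZ(\bar X)| \leq m(\log m + 2\log\log m + O(\log|\Sigma|))$ deterministically, and then establish $m \log m \leq H(\calm)\cdot n \cdot (1+\epsilon/2)$ with probability at least $1-\delta$. The encoding-length bound is immediate from the definition of $\LZ$ and the chosen prefix-free encoding $[\cdot]$ of integers. A standard combinatorial argument gives $m = O(n/\log n)$: since the $m$ distinct phrases sum to total length $n$ and there are only $|\Sigma|^\ell$ strings of length $\ell$, we must have $m \log_{|\Sigma|} m \leq O(n)$. Consequently the lower-order terms in the encoding length are $o(n)$, and the theorem reduces to controlling $m \log m$.

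\textbf{Ziv's inequality via a reference distribution.} To bound $m \log m$, I would use the Kraft-type inequality: for any probability distribution $Q$ on $\Sigma^*$ and any $m$ distinct strings $\sigma_1,\ldots,\sigma_m$, $\sum_i -\log Q(\sigma_i) \geq m \log m$, by convexity applied to $\sum_i Q(\sigma_i)\leq 1$. I apply this with $Q(\sigma) = q(|\sigma|)\cdot P_\calm(\sigma)$, where $q(\ell) = 1/(\ell(\ell+1))$ is a length distribution and $P_\calm(\sigma) = \Pr_\calm[X_0\cdots X_{|\sigma|-1}=\sigma]$ under $Z_0 \sim \pi$. This yields
\[
m \log m \;\leq\; \sum_i -\log q(|\sigma_i|) \;+\; \sum_i -\log P_\calm(\sigma_i).
\]
The first sum is $O\!\left(\sum_i \log|\sigma_i|\right) \leq O(m\log(n/m)) = o(n)$ by concavity of $\log$ and $m = O(n/\log n)$.

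\textbf{Linking $\sum_i -\log P_\calm(\sigma_i)$ to $-\log \Pr[\bar X]$.} For a pure Markov chain on $\Sigma$, a direct calculation gives $\Pr[\bar X]/\prod_i P_\calm(\sigma_i) = \prod_{i=1}^{m-1} M(X_{t_i-1},X_{t_i})/\pi(X_{t_i})$, a product of $m-1$ factors each bounded by $1/\pi_{\min}$ (finite by irreducibility and aperiodicity). Taking logs, $\sum_i -\log P_\calm(\sigma_i) \leq -\log \Pr[\bar X] + O(m) = -\log \Pr[\bar X] + o(n)$. For HMMs, the analogous identity requires conditioning on the hidden states at phrase boundaries and invoking the same bounded-ratio property for the joint $(Z,X)$-chain, with an additional $O(m\log k)=o(n)$ slack for the boundary-state information. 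Finally, the Asymptotic Equipartition Property for the ergodic HMM yields $-\log \Pr[\bar X] \leq H(\calm)\cdot n \cdot (1+o(1))$ with high probability, completing the chain of inequalities.

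\textbf{Main obstacle.} The hardest step is establishing the AEP for HMMs. For pure Markov chains this is elementary, since $-\log \Pr[\bar X]$ telescopes as $\sum_t -\log \Pr[X_t\mid X_{t-1}]$ and the ergodic law of large numbers applies. HMM outputs are not themselves Markov, so one must argue via the mixing of the underlying hidden chain: partition $\bar X$ into blocks of length $L$ with $1 \ll L \ll n$ chosen so the hidden chain approaches stationarity within each block, condition on the hidden states at block boundaries to make the blocks near-independent, and apply Chernoff/Hoeffding to the (bounded) per-block log-probability contributions. The boundary-state side information contributes only $O((n/L)\log k)=o(n)$ bits. A secondary delicate point is that the phrase boundaries $t_i$ are themselves random (determined by the parsing), so the conditional factorization underlying the HMM bounded-ratio argument must be justified via the strong Markov property at stopping times rather than the ordinary Markov property at deterministic times.
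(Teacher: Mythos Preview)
Your proposal is correct and follows the classical ``Ziv's inequality plus AEP'' route, which is genuinely different from the paper's proof. The paper does \emph{not} bound $m\log m$ directly against $-\log\Pr[\bar X]$ via a reference distribution. Instead it (i) builds an auxiliary compressor $\IH_{L,\calm}$ (Iterated Huffman: Huffman-code length-$L$ blocks using the stationary block distribution of $\calm$), (ii) shows $\IH_{L,\calm}$ is an $(\epsilon,\delta)$-good compressor for $\calm$ by a concentration argument on block counts, (iii) observes that $\IH_{L,\calm}$ is a \emph{finite state compressor}, and (iv) proves a pointwise competitiveness lemma: for any finite state compressor $A$ and any $X$, $|\LZ(X)|\le |A(X)|\,(1+o_n(1))$. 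Step (iv) goes through a ``complexity'' quantity $C(X)$ (the maximum number of distinct concatenated pieces) that simultaneously upper-bounds $|\LZ(X)|$ and lower-bounds $|A(X)|$ up to $t\log t$.

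What each approach buys: your argument is shorter and more direct for the stated theorem, and it makes the role of the AEP completely explicit. The paper's detour through finite state compressors yields a strictly stronger intermediate result---LZ is $(1+o(1))$-competitive against \emph{every} finite state compressor on \emph{every} input---which is of independent interest and does not require any probabilistic model at all; the HMM/AEP work is then confined to analyzing $\IH$. Amusingly, the probabilistic core of both proofs is essentially the same computation: your block-based AEP for HMMs (partition into length-$L$ blocks, condition on hidden states at block boundaries, apply concentration) is exactly the content of the paper's Lemma~\ref{lem:conc} and Claim~\ref{clm:conc}. Two minor remarks on your sketch: the HMM ``bounded-ratio'' step goes through cleanly via $P_\calm^{(a)}(\sigma)\le P_\calm(\sigma)/\pi(a)$ and summing over the $k^{m+1}$ boundary-state choices, giving the $O(m\log k)+O(m\log(1/\pi_{\min}))=o(n)$ slack you anticipated; and the stopping-time worry is not actually an issue, since for each fixed realization $\bar x$ the phrase boundaries $\tau_i(\bar x)$ are deterministic times, so the ordinary Markov factorization already applies pointwise.
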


\section{Overview of Proof}

The essence of the analysis is quite simple. We first design a simple algorithm, that we call the Iterated Huffman algorithm (see Section~\ref{sec:IH}) to compress from a {\em known} hidden Markov model. This algorithm works using the knowledge of the statistics (the frequencies) of small strings produced by the source. Given the frequencies of all length $L$ sequences, for an appropriately chosen parameter $L$, the Iterated Huffman algorithm builds a Huffman coding scheme for these length $L$ sequences and then given a long sequence of length $n$ that is a multiple of $L$, it divides the string into $n/L$ blocks of length $L$ and applies the Huffman coding scheme for each block separately. 

It is straightforward, using standard concentration bounds for independent random variables or martingales, and elementary facts about convergence of Markov chains, to show that the Iterated Huffman algorithm yields a good compressor. We show this in Section~\ref{sec:IH-anal}. The crux of the analysis is to turn this into a statement about the performance of the Lempel-Ziv algorithm. This is achieved by observing that the Iterated Huffman algorithm is a ``finite-state'' compressor (it only needs to remember a finite number of characters, specifically the last $L$, to compress a string of length $n$) and that Lempel-Ziv is competitive against any finite state compressor. We show this in Section~\ref{sec:LZ-anal}. Together these two steps conclude the proof of Theorem~\ref{thm:main}.

\section{Iterated Huffman (IH) algorithm}\label{sec:IH}

The Iterated Huffman algorithm is essentially a simple one. Given a hidden Markov model $\calm$ and a length parameter $L$ the iterated Huffman algorithm, denoted $\IH_{L,\calm}$ performs the following steps:
\begin{enumerate}
    \item It first computes the expected frequencies of a string $\gamma \in \Sigma^L$, i.e., the quantity $P_L(\gamma) \triangleq \Pr[X_0,\ldots,X_{L-1} = \gamma]$ where $X_0$ is distributed according to the stationary distribution of $\calm$. 
    \item Next the algorithm computes the Huffman coding function $\Huff:\Sigma^L \to \{0,1\}^*$ that minimizes $\sum_{\gamma \in \Sigma^L} P_L(\gamma)\cdot |\Huff(\gamma)|$.
    In particular, we assume that for every $\gamma$ we have $|\Huff(\gamma)| \leq 1 + \log_2 (1/P_L(\gamma))$.\footnote{We note that such an inequality need not be strictly true for every Huffman coding scheme. But weaker (sub-optimal) coding schemes, and in particular Shannon's coding scheme, do achieve this property and this is all we will need from our function $\Huff$.}
    \item Finally to compress a string $X_0,\ldots,X_{n-1} \in \Sigma^n$, it views the string as a string $\bar{Y} = (Y_0,\ldots,Y_{(n/L)-1})$ where the $Y_j$'s are in $\Sigma^L$ and applies $\Huff$ to each $Y_j$. Thus $\IH_{L,\calm}(X_0,\ldots,X_{n-1}) = \Huff(Y_0) \circ \Huff(Y_1) \circ \cdots \circ \Huff(Y_{(n/L)-1}$. (For simplicity we assume $n$ is divisible by $L$.) 
\end{enumerate} 

We stress that the Iterated Huffman is {\em not} universal since the compression algorithm (and the corresponding decompressor) depend on knowledge of the hidden Markov model $\calm$. We also note that we do not care about the computational efficiency of this algorithm. In particular we make no assertions about the complexity of the first step (though it can be bounded as function of $|\Sigma|$ and $L$). This is not relevant to us since all we want to do is show that this algorithm is a good compressor for $\calm$, and that the Lempel-Ziv algorithm is almost as good as this algorithm. We turn to the analysis of the Iterated-Huffman algorithm next.

\section{Analysis of the Iterated-Huffman algorithm}\label{sec:IH-anal}

In this section we show that if we fix the hidden Markov model $\calm$, and the error parameters $\epsilon$ and $\delta$ then there exists an $L$ such that $\IH$ is a good compressor for $\calm$. 

Fix a hidden Markov model $\calm$ and let $X_0,\ldots,X_L,\ldots$ denote a random sequence drawn according to $\calm$ with initial distribution being the stationary distribution for $\calm$. Let $Z_0,\ldots,Z_L,\ldots,$ denote the corresponding state sequence.
For positive integer $L$ and sequence $\gamma \in \Sigma^L$, let $P_L(\gamma)$ denote the probability that $(X_0,\ldots,X_{L-1}) = \gamma$. 
We say that $L$ is {\em $\epsilon$-compressive} if 
$H(X_0,\ldots,X_{L-1}) + 1 \leq H(\calm)(1 + \epsilon)L$. (In particular this implies that the expected length of the Huffman coding of $X_0,\ldots,X_{L-1}$ is upper bounded by $H(\calm)(1 + \epsilon)L$.

\begin{proposition}
For every $\calm$ and $\epsilon$ there exists $L_0$ such that for all $L \geq L_0$, $L$ is $\epsilon$-compressive. 
\end{proposition}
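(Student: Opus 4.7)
The plan is to reduce the statement to the definition of the entropy rate, using the fact that the initial distribution is the stationary distribution $\Pi$ (so the limit definition of $H(\calm)$ applies). By the preceding proposition, the entropy rate is well-defined and independent of the initial distribution, and in particular it equals
\[
H(\calm) \;=\; \lim_{L \to \infty} \frac{1}{L}\,H(X_0,\ldots,X_{L-1}),
\]
where $X_0,X_1,\ldots$ are drawn from $\calm$ started at $\Pi$.

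First I would dispose of the (near-)degenerate case $H(\calm) = 0$ separately, or simply note that the claim as written is interesting only when $H(\calm) > 0$. Assuming $H(\calm) > 0$, here is the main step. Given $\epsilon > 0$, apply the limit above with tolerance $\epsilon/2$: there exists $L_1$ such that for all $L \geq L_1$,
\[
\frac{1}{L}\,H(X_0,\ldots,X_{L-1}) \;\leq\; H(\calm)\bigl(1 + \tfrac{\epsilon}{2}\bigr).
\]
Multiplying by $L$, this gives $H(X_0,\ldots,X_{L-1}) \leq H(\calm)(1+\epsilon/2)L$. To absorb the additive $+1$, I would additionally choose $L_2$ so that $1 \leq H(\calm)\cdot \tfrac{\epsilon}{2}\cdot L$ for all $L \geq L_2$, i.e. $L_2 \geq 2/(\epsilon H(\calm))$. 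Setting $L_0 = \max(L_1, L_2)$, for every $L \geq L_0$ we obtain
\[
H(X_0,\ldots,X_{L-1}) + 1 \;\leq\; H(\calm)\bigl(1+\tfrac{\epsilon}{2}\bigr)L + H(\calm)\cdot\tfrac{\epsilon}{2}\cdot L \;=\; H(\calm)(1+\epsilon)L,
\]
which is exactly the $\epsilon$-compressive condition.

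There is no real obstacle: the proposition is essentially a restatement of the definition of entropy rate, plus the harmless absorption of a constant additive term into a multiplicative $(1+\epsilon)$ slack. The only subtlety to flag is that the definition of $H(\calm)$ in the paper does not specify $\Pi_0$, but since the ``furthermore'' part of the preceding proposition guarantees independence of the initial distribution, we are entitled to work with the stationary start used to define $P_L$ in the Iterated Huffman algorithm. If desired, one could make $L_0$ explicit as a function of $\calm$ and $\epsilon$ by tracking the convergence rate, but this is not needed for the statement.
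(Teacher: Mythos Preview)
Your proof is correct and essentially identical to the paper's: both invoke the definition of the entropy rate to get $H(X_0,\ldots,X_{L-1}) \leq H(\calm)(1+\epsilon/2)L$ for $L \geq L_1$, then set $L_0 = \max\{L_1, 2/(\epsilon H(\calm))\}$ to absorb the additive $+1$ into the remaining $\epsilon/2$ slack. Your additional remarks about the $H(\calm)=0$ case and the stationary start are valid observations that the paper leaves implicit.
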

\begin{proof}
By the definition of the entropy rate there exists $L_1$ such that for all $L \geq L_1$ we have 
$$H(X_0,\ldots,X_{L-1}) \leq H(\calm)(1 + \epsilon/2)L.$$ Now let $L_0 = \max\{L_1, 2/(\epsilon\cdot H(\calm)\}$, then for all $L \geq L_0$ we have $$H(X_0,\ldots,X_{L-1}) + 1 \leq H(\calm)(1 + \frac{\epsilon}{2})L + \frac{H(\calm)\epsilon}{2} L = H(\calm)(1+\epsilon)L$$.
\end{proof}
Note that the $\epsilon$-compressive condition also gives a lower bound on $L$ since $H(X_0,\ldots,X_{L-1}) \geq L H(\calm)$ and so we get $1 \leq \epsilon H(\calm) L$. (This fact will be used later.)

If the sequences $Y_0,Y_1,Y_2,\ldots,Y_{n/L}$ were drawn i.i.d. with $Y_i$ having same distribution as $X_0,\ldots,X_{L-1}$ then we would be done immediately by some Chernoff bound arguments --- the expected length of the encoding of each $Y_i$ is $H(\calm)(1 + \epsilon)L$ and $n/L$ independent samples would have total length sharply concentrated around $(n/L) \cdot H(\calm)(1 + \epsilon)L = n \cdot H(\calm)\cdot (1 + \epsilon)$. But the $Y_i$'s are not i.i.d. The rest of the argument below shows that there is enough independence among them to get the same effect.

For states $a,b \in [k]$ let $\rho_{ab,L}$ denote the probability that that $Z_0 = a$ and $Z_L = b$ when the initial distribution of $Z_0$ is the stationary distribution.  Let $P_{ab,L}(\gamma)$ be the probability that $X_0,\ldots,X_{l-1})=\gamma$ conditioned on $Z_0 = a$ and $Z_L = b$. Note that this implies $P_L(\gamma) = \sum_{a,b \in [k]} \rho_{ab,L} \cdot P_{ab,L}(\gamma)$.

Let $\rho_{b|a,L}$ denote the probability that $Z_L = b$ conditioned on $Z_0 = a$. Let $\rho_a = \Pi(a)$ denote the stationary probability of state $a$. Note that we have $\rho_{ab,L} = \Pi(a) \cdot \rho_{b|a,L}$. Further note that as $L \to \infty$, the quantities $\rho_{b|a,L}$ converge to $\Pi(b)$ and the quantities $\rho_{L,ab}$ converge to $\Pi(a)\cdot\Pi(b)$. We say that $L$ is {\em $\epsilon$-mixing} if for every $a$, it is the case that $\sum_{b} | \rho_{ab,L} - \Pi(a)\Pi(b)| \leq \epsilon\cdot \Pi(a)$ or equivalently
$\sum_b | \rho_{b|a,L} - \Pi(b)| \leq \epsilon$.

\begin{proposition}
For every hidden Markov model $\calm$ with an underlying irreducible and aperiodic chain, for every $\epsilon > 0$ there exists $L_0$ such that for all $L \geq L_0$, $L$ is $\epsilon$-mixing. 
\end{proposition}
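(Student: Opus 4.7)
The plan is to reduce this directly to the standard convergence theorem for irreducible aperiodic finite-state Markov chains, which the paper has already cited as background (Levin--Peres--Wilmer, Chapter 1). That theorem guarantees that for each initial state $a$, the distribution of $Z_L$ given $Z_0 = a$ converges to $\Pi$ as $L \to \infty$; the $\epsilon$-mixing condition is literally a quantitative restatement of this convergence.

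More concretely, the first step is to invoke the convergence theorem in the form: for the irreducible aperiodic $k$-state chain underlying $\calm$, for every state $a \in [k]$ and every $\eta > 0$ there exists $L(a, \eta)$ such that for all $L \geq L(a, \eta)$, the total variation distance between $\rho_{\cdot|a,L}$ and $\Pi$ is at most $\eta/2$, and hence $\sum_{b} |\rho_{b|a,L} - \Pi(b)| \leq \eta$. Since the chain has only finitely many states, taking $L_0 = \max_{a \in [k]} L(a, \epsilon)$ gives a single threshold beyond which the bound $\sum_b |\rho_{b|a,L} - \Pi(b)| \leq \epsilon$ holds uniformly in $a$.

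The second step is to verify that this is exactly the $\epsilon$-mixing condition as defined. Using $\rho_{ab,L} = \Pi(a)\cdot\rho_{b|a,L}$, one has
\[
\sum_b |\rho_{ab,L} - \Pi(a)\Pi(b)| \;=\; \Pi(a)\sum_b |\rho_{b|a,L} - \Pi(b)| \;\leq\; \Pi(a)\cdot\epsilon,
\]
which is the first form of the definition, and the bracketed equivalent form is obtained by dividing by $\Pi(a)$ (well-defined since irreducibility forces $\Pi(a) > 0$).

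There is really no obstacle here beyond citing the right black box: the only subtlety is emphasizing that the finite state space $[k]$ lets us upgrade pointwise-in-$a$ convergence to uniformity in $a$ by taking the maximum over finitely many thresholds. If one instead wanted a self-contained proof of the underlying convergence, the standard route is to find $T$ and $\alpha > 0$ so that every entry of $M^T$ is at least $\alpha$ (possible by irreducibility and aperiodicity), and then iterate a coupling or contraction argument to obtain geometric decay $\sum_b|\rho_{b|a,L} - \Pi(b)| \leq 2(1-k\alpha)^{\lfloor L/T \rfloor}$, from which the desired $L_0$ can be read off explicitly; but this is unnecessary given the background the paper assumes.
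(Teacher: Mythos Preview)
Your proposal is correct and takes essentially the same approach as the paper: the paper's proof is the single sentence ``Follows from the fact that irreducible aperiodic Markov chains converge to their stationary distribution,'' and what you have written is simply a careful unpacking of that sentence (uniformity over $a$ via finiteness of the state space, and matching the two equivalent forms of the $\epsilon$-mixing definition). Nothing is missing and nothing differs in spirit.
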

\begin{proof}
Follows from the fact that irreducible aperiodic Markov chains converge to their stationary distribution. 
\end{proof}

The rest of the proof (simple modulo some standard concentration bounds) shows that if $L$ is $\epsilon_1$-compressive and $\epsilon_2$-mixing for small enough $\epsilon_1$ and $\epsilon_2$ then $\IH$ is $(\epsilon,\delta)$-good.

\begin{lemma}\label{lem:conc}
For every $\calm$ and $\epsilon > 0$, there exist $\epsilon_1>0$ and $\epsilon_2 > 0$ such that if $L$ is $\epsilon_1$-compressive and $\epsilon_2$-mixing, then for every $\delta > 0$  we have $\IH_{L,\calm}$ is $(\epsilon,\delta)$-good for $\calm$.\footnote{So $L$ does not  depend on $\delta$ --- only $n$ does. In fact, the dependence of $n$ on $\delta$ is just logarithmic in $1/\delta$.}
\end{lemma}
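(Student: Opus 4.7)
The iterated Huffman algorithm outputs $\sum_{j=0}^{N-1} W_j$ bits, where $N = n/L$ is the number of blocks and $W_j \defeq |\Huff(Y_j)|$ with $Y_j = (X_{jL},\ldots,X_{(j+1)L-1})$. The Shannon bound imposed on $\Huff$ gives $W_j \leq 1 + \log(1/P_L(Y_j))$, and deterministically $W_j \leq C \defeq 1 + L\log(1/p_{\min})$, where $p_{\min} > 0$ is the smallest positive value of $P_L(\gamma)$ (a constant depending only on $\calm$ and $L$). The goal is to choose $\epsilon_1,\epsilon_2$ so that $\Pr[\sum_j W_j > H(\calm)(1+\epsilon) n] \leq \delta$ for all $n \geq n_0(\delta)$.

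\textbf{Expectation bound.} If $Z_0 \sim \Pi$ then every $Y_j$ has the stationary block distribution $P_L$, so $\E[W_j] \leq 1 + H(X_0,\ldots,X_{L-1}) \leq H(\calm)(1+\epsilon_1)L$ by $\epsilon_1$-compressiveness. For a general initial distribution $\Pi_0$, the $\epsilon_2$-mixing hypothesis says the distribution of $Z_L$ started from any $a$ is within $\epsilon_2$ in total variation of $\Pi$; decomposing this $L$-step transition as a convex combination of $\Pi$ and a remainder (a Doeblin-type argument, available once $\epsilon_2$ is smaller than $\min_b \Pi(b)$) shows that the marginal of $Z_{jL}$ approaches $\Pi$ geometrically in $j$. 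Combined with the deterministic bound $W_j \leq C$, this yields $\E[W_j] \leq H(\calm)(1+\epsilon_1)L + O(\epsilon_2 C)$, so summing over $j$,
$$\E\!\left[\sum_{j=0}^{N-1} W_j\right] \;\leq\; H(\calm)(1+\epsilon_1)\,n \;+\; O\!\left(\epsilon_2\,n\,\log(1/p_{\min})\right).$$
Taking $\epsilon_1 = \epsilon/4$ and $\epsilon_2$ small enough (both functions of $\calm$ and $\epsilon$ only) makes the right-hand side at most $H(\calm)(1+\epsilon/2)\,n$.

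\textbf{Concentration.} To overcome block dependence, apply Azuma--Hoeffding to the Doob martingale $M_k \defeq \E[\sum_j W_j \mid Y_0,\ldots,Y_{k-1}]$, so that $M_0 = \E[\sum_j W_j]$ and $M_N = \sum_j W_j$. Each increment $|M_{k+1}-M_k|$ is bounded by the maximum change, as $y$ varies, in $\E[\sum_j W_j \mid Y_0,\ldots,Y_{k-1},Y_k=y]$: the direct term coming from $W_k$ is $\leq C$, and the indirect term enters only through the posterior on $Z_{(k+1)L}$ (by the Markov property), affecting $\E[W_j]$ for $j>k$ with magnitudes that decay geometrically in $j-k$ by the same iterated-mixing argument. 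Summing this geometric series bounds the indirect contribution by $O(C)$, so each martingale increment is $O(C)$, and Azuma gives
$$\Pr\!\left[\,\Big|\sum_j W_j - \E\sum_j W_j\Big| > t\,\right] \;\leq\; 2\exp\!\left(-\Omega\!\left(\tfrac{t^2}{N C^2}\right)\right).$$
Setting $t = H(\calm)(\epsilon/2)\,n$ and using $N=n/L$, the right-hand side is at most $\delta$ as soon as $n = \Omega(\log(1/\delta))$ with the implicit constant depending on $\calm$ and $\epsilon$ only, matching the promised logarithmic dependence of $n_0$ on $1/\delta$. Combining with the expectation bound, $\sum_j W_j \leq H(\calm)(1+\epsilon)n$ w.p.\ $\geq 1-\delta$.

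\textbf{Main obstacle.} The blocks $Y_j$ are not independent, so i.i.d.\ concentration does not apply directly. The entire technical content lies in upgrading the single-step $\epsilon_2$-mixing hypothesis into geometric decay of correlations between distant blocks, which is what makes both (i) the comparison of $\E[W_j]$ under the true dynamics to the stationary-start case and (ii) the per-step boundedness of the Doob-martingale increments go through. Once geometric mixing is in hand, the remainder is a routine Azuma--Hoeffding calculation.
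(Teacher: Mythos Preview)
Your argument is correct but follows a genuinely different route from the paper. The paper's proof decomposes the compression length as $(n/L)+\sum_{\gamma}\sum_{a,b}K_{ab}(\gamma)\log(1/P_L(\gamma))$, where $K_{ab}(\gamma)$ counts epochs that start in state $a$, end in state $b$, and emit the block $\gamma$. It then shows concentration of the empirical counts $n_a$, $n_{ab}$, and $K_{ab}(\gamma)$ around their expectations individually: the first two via an Azuma argument on the epoch-indexed sequence (using only single-step $\epsilon_2$-mixing, not geometric decay), and the last by observing that, conditioned on $n_{ab}$, the variable $K_{ab}(\gamma)$ is a sum of $n_{ab}$ i.i.d.\ $\bern(P_{ab,L}(\gamma))$'s. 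A union bound over all $a,b\in[k]$ and $\gamma\in\Sigma^L$ then finishes. Your approach instead applies Azuma once, to the Doob martingale of the total length $\sum_j W_j$ with respect to the block filtration, and the work goes into bounding each increment by $O(C)$ via a Doeblin minorization extracted from $\epsilon_2$-mixing (requiring $\epsilon_2<\min_b\Pi(b)$) and the resulting geometric contraction. What the paper's approach buys is that it never needs to upgrade one-step mixing to a contraction rate, so it is slightly more elementary; the price is the union bound over $|\Sigma|^L$ strings, which forces $n_0$ to scale like $1/\min_{a,b,\gamma}P_{ab,L}(\gamma)$, typically exponential in $L$. Your approach avoids that union bound entirely and yields a tighter $n_0$ (polynomial in $L$ through $C$), at the cost of the extra coupling/minorization step. (Minor slip: your definition should read $C=1+\log(1/p_{\min})$, not $1+L\log(1/p_{\min})$; this does not affect the argument.)
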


\begin{proof}
We assume $\epsilon \leq 1$.

Let $\epsilon_1 = \epsilon/6$. This ensures $H(X_0,\ldots,X_{L-1}) \leq H(\calm)(1+\epsilon/6)L$ and $L \geq 6/(\epsilon H(\calm))$. (Throughout this proof we assume that $Z_0,Z_1,Z_2,\ldots$ denote the underlying states of $\calm$ and that $Z_0$ is distributed according to the stationary distribution $\Pi$.) In what follows we show that for sufficiently small but positive $\epsilon_2$ and for sufficiently large $n$ that is a multiple of $L$, the expected length of $\IH_{L,\calm}(X_0,\ldots,X_{n-1})$ is, with probability at least $1-\delta$, bounded from above by $(1+\epsilon/6)^2(n/L)H(X_0,\ldots,X_{L-1})$. This will establish the lemma.

Recall that we divide time into $n/L$ ``epochs'' starting at multiples of $L$ and of length $L$. For states $a$ and $b$, let $n_a$ denote the number of epochs starting at state $a$, i.e, $n_a = \{0 \leq i < n/L | Z_{iL} = a\}$,
and let $n_{ab}$ denote the number of epochs starting in state $a$ and ending in state $b$, i.e., $n_{ab} = \{0 \leq i < n/L | Z_{iL} = a \mbox{ and } Z_{(i+1)L} = b\}$. Finally for states $a,b$ and $\gamma \in \Sigma^L$, let $K_{ab}(\gamma)$ denote the number of epochs starting in state $a$, ending in state $b$ and generating the output $\gamma$, i.e., 
$$K_{ab}(\gamma) = \{0 \leq i < n/L | Z_{iL} = a, ~ Z_{(i+1)L} = b\mbox{ and } (X_{iL},\ldots,X_{(i+1)L-1}) = \gamma\}.$$ 
Note that in terms of the above quantities we have 
$H(X_0,\ldots,X_{L-1}) =  \sum_{\gamma \in \Sigma^L} P_L(\gamma)\log (1/P_L(\gamma))$, whereas the length of the compression is at most 
$$\sum_{\gamma \in \Sigma^L}  \sum_{a,b \in [k]} K_{ab}(\gamma)\cdot (1 + \log (1/P_L(\gamma))) = (n/L) + \sum_{\gamma \in \Sigma^L}  \sum_{a,b \in [k]} K_{ab}(\gamma)\cdot \log (1/P_L(\gamma)).$$ 
Since we want to show that the latter quantity can be bounded in terms of the former, it suffices to show that $n/L \leq (\epsilon/6)n H(\calm)$ and that with probability at least $1 - \delta$ the following holds: ``for every $a, b, \gamma$, $K_{ab}(\gamma) \leq (1+\epsilon/6) \rho_{ab,L} P_{ab,L}(\gamma)(n/L)$''. The first inequality follows from $L \geq 6/(\epsilon H(\calm))$ which in turn is a consequence of $L$ being $\epsilon_1$-compressive. We thus turn to bounding the $K_{ab}(\gamma)$'s. In what follows we show that the quantities $n_a$, $n_{ab}$, and ultimately $K_{ab}(\gamma)$ are sharply concentrated around their expectation --- all of these concentrations will follow from the $\epsilon_2$-mixing property. 

\begin{claim}\label{clm:conc}
There exists $n_1$ such that for all $n \geq n_1$, the probability that there exists $a,b$ such that either $n_a \not\in \{\frac{n}{L}(\Pi(a)-2\epsilon_2),\frac{n}{L}(\Pi(a)+2\epsilon_2)\}$ or $n_{ab} \not\in \{\frac{n}{L}(\rho_{ab,L}-3\epsilon_2), \frac{n}{L}(\rho_{ab,L}+3\epsilon_2)\}$ is at most $\delta/3$.
\end{claim}

\begin{proof}
Fix a pair $a,b$ and consider the four events 
\begin{enumerate}
    \item[(E1)] $n_a < \frac{n}{L}(\Pi(a)-2\epsilon_2)$
    \item[(E2)] $n_a > \frac{n}{L}(\Pi(a)+2\epsilon_2)$
    \item[(E3)] $n_{ab} < \frac{n}{L}(\rho_{ab,L}-5\epsilon_2)$
    \item[(E4)] $n_{ab} > \frac{n}{L}(\rho_{ab,L}+5\epsilon_2)$
\end{enumerate}

For every one of these events, we prove that they occur with probability at most $\delta/(12k^2)$ and the claim follows by a union bound over the $k^2$ choices of $a$ and $b$ and the four choices of the error events among (E1)-(E4). We thus turn to bounding the probability of these four events for a fixed $a,b$.

We start with event (E1). Let $m = n/L$ denote the number of epochs. Note that the expectation of $n_a$ is $\Pi(a)\cdot m$ and we wish to bound the probability that $n_a$ is smaller than its expectation by an additive $2\epsilon_2 m$.  Let $U_i$ be the indicator of the event that
the $i$th epoch starts in state $a$, i.e., $U_i = 1$ if $Z_{iL} = a$ and $0$ otherwise. Note $n_a = \sum_{i=0}^{m-1} U_i$. We now use the 
the $\epsilon_2$-mixing assumption, to note that $$\Pr[U_i = 1 | U_0,\ldots,U_{i-1} ] \geq \Pi(a) - \epsilon_2.$$ Specifically, if $Z_{(i-1)L} = b$ then $\Pr[U_i = 1 | U_0,\ldots,U_{i-1} U_{(i-1)L}] = \rho_{a|b,L} \geq \Pi(a) - \epsilon_2$ which is a bound that holds for every $b$.

Thus if we create a new sequence of random variables $U'_i$ derived from $U_i$ by setting $U'_i  = 0$ if $U_i = 0$ and $U'_i = \bern((\Pi(a) - \epsilon_2)/\E[U_i])$ then we get that the variables $V_i = \sum_{j<i} U'_j + (m-i)(\Pi(a) - \epsilon_2)$ form a martingale sequence with bounded difference (since $|V_i - V_{i-1}| \leq 1$). Applying Azuma's inequality \cite[Theorem 12.4]{MitzUpf} we get that $$\Pr[V_m  < V_0 - \epsilon_2 m] \leq  \exp(-\epsilon_2^2 m) \leq \frac{\delta}{12k^2}$$ provided $m = n/L$ is sufficiently large (in particular
choosing $n_1 = O(L/\epsilon_2^2 \log (k/\delta)$ suffices). We are now done, since we have $V_0 = m(\Pi(a) - \epsilon_2)$, $V_m = \sum_{i< m} U'_i$ and $U'_i \leq U_i$. Combining with $n_a = \sum_{i<m} U_i$ we get $$\Pr[n_a \leq \frac{n}{L}(\Pi(a)-2\epsilon_2)] \leq \Pr[V_m \leq \frac{n}{L}(\Pi(a)-2\epsilon_2)] \leq \frac{\delta}{12k^2}.$$

The bound for (E2) is completely similar. The analyses of (E3) and (E4) are also similar with minor differences. We now define the random variable sequence $W_i$ where $W_i = 1$ if $Z_{iL}=a$ and $Z_{(i+1)L} = b$, and $W_i = 0$ otherwise.
Note that we have $\sum_{i=0}^{m-1} W_i = n_{ab}$ and so one may hope for an analysis as in
the case of (E1), however $W_i$ is not sufficiently independent of $W_{i-1}$ to reproduce the same steps. Instead we bound the even terms $\sum_{0 \leq i< m/2} W_{2i}$ and odd terms
$\sum_{0 \leq i < m/2} W_{2i+1}$ separately. In each case we now have $W_i | W_{i-2}$ has enough independence to claim that $$\E[W_i | W_{i-2}] \geq (\Pi(a) - \epsilon_2)(\Pi(b) - \epsilon_2) \geq \Pi(a)\Pi(b) - 2\epsilon_2.$$ This allows us to conclude that if $n$ is sufficiently large then $$\Pr[\sum_i W_i \leq m(\Pi(a)\Pi(b) - 3\epsilon_2)] \leq \frac{\delta}{12k^2}.$$ This concludes the proof of Claim~\ref{clm:conc}.
\end{proof}

We now turn to showing that for every $\gamma \in \Sigma^L$ the empirical count of $\gamma$ given by $\sum_{a,b}K_{ab}(\gamma)$ concentrates around its expectation given by $\frac{n}{L}\BP(\gamma) = \frac{n}{L}\sum_{a,b}\rho_{ab,L}P_{ab,L}(\gamma)$. 
Note that conditioned on $n_{ab}$, $K_{ab}(\gamma)$ can be expressed as a sum of $n_{ab}$ i.i.d. random variables distributed according to $\bern(P_{ab,L}(\gamma))$. Thus, by concentration $$\Pr[K_{ab}(\gamma) \geq (1+\epsilon_2) n_{ab}P_{ab,L}(\gamma)] \leq \exp(-\epsilon_2^2 n_{ab}P_{ab,L}(\gamma)).$$ By setting $n$ large enough, we get that this quantity is at most
$\delta/(2k^2|\Sigma|^L)$ provided $n_{ab} \geq (\rho_{ab,L}-3\eps_2)(n/L)$. (Specifically we will need $n = \max_{a,b,\gamma} \{\Omega(L/(\epsilon_2^{2} P_{ab,L}(\gamma) \rho_{ab,L}) \log(3k^2 |\Sigma|^L/\delta)\}$.\footnote{Note that the bottleneck here is likely to be the $1/P_{ab}(\gamma)$ term which is at least exponential in $L$ with a base that may depend on $\calm$.}
By a union bound over $a,b$ and $\gamma$, we conclude that the probability that there exists
$a,b,\gamma$ such that $K_{ab}(\gamma) \geq (1+\epsilon_2) n_{ab}P_{ab,L}(\gamma)$ is at most $\delta/2$. Combining with Claim~\ref{clm:conc} we get that with probability at least $1-\delta$ we have  
$n_{ab} \in \{\frac{n}{L}(\rho_{ab,L}-3\epsilon_2), \frac{n}{L}(\rho_{ab,L}+3\epsilon_2)\}$ and $K_{ab}(\gamma) \leq (1+\epsilon_2) n_{ab}P_{ab,L}(\gamma)$. When these hold we now claim that the length of the compression is at most $(1+\epsilon)H(\calm)n$. We first note that the conditions ensure 
$$K_{ab}(\gamma) \leq (1+\epsilon_2)P_{ab,L}(\gamma)\frac{n}{L}(\rho_{ab,L}+3\epsilon_2)
\leq (\rho_{ab,L}+4\epsilon_2)P_{ab,L}(\gamma)\frac{n}{L}.$$
We now set $\epsilon_2 = \min_{a,b} \{ \epsilon/(24 \rho_{ab,L})\}$ so that we have
$K_{ab}(\gamma) \leq (1+\frac{\epsilon}{6})\rho_{ab,L}P_{ab,L}(\gamma)\frac{n}{L}$.
Summing over $a,b$ we get that the total number of occurrences of $\gamma$ is 
$$\sum_{a,b} K_{ab}(\gamma) \leq (1+\frac{\epsilon}{6}) \frac{n}L \sum_{a,b} \rho_{ab,L}P_{ab,L}(\gamma) = (1+\frac{\epsilon}{6}) \frac{n}L P_L(\gamma).$$
We conclude that the length of the compression 
$$|\IH_{L,\calm}(X_0,\ldots,X_{n-1})| \leq \sum_{\gamma} (1+\frac{\epsilon}{6}) \frac{n}L P_L(\gamma) (1+ \log (\frac{1}{P_L(\gamma)}) = (1+\frac{\epsilon}{6})(n/L + H(X_0,\ldots,X_{L-1})).$$
Finally we conclude by using $n/L \leq \frac{\epsilon}{6} H(\calm) n$ and $H(X_0,\ldots,X_{L-1}) \leq (1+\frac{\epsilon}{6})H(\calm)L$ (both of which follow from the fact that $L$ is compressive).
Putting the inequalities together we have $|\IH_{L,\calm}(X_0,\ldots,X_{n-1})| \leq (1+\frac{\epsilon}{6})^3 H(\calm) n \leq (1+\epsilon) H(\calm) n$ (where the last inequality uses $\epsilon \leq 1$) and thus we have that $\IH$ is an $(\epsilon,\delta)$-good compressor for $\calm$. 
\end{proof}

\section{Analysis of Lempel Ziv}\label{sec:LZ-anal}

\begin{definition}[Finite state transducer]
A \textit{finite state transducer} is a finite state machine with a single input tape and a single output tape. That is, at every step, the transducer shifts states based on the input and then writes some symbols to the output tape. Formally a transducer is given by a 5-tuple $(Q,q_0,\Sigma,\Gamma,\delta)$ where $Q$ is a finite set representing the state space, $q_0 \in Q$ is the initial state, $\Sigma$ is a finite set representing the input alphabet and $\Gamma$ is a finite set representing the output alphabet and $\delta:Q \times \Sigma \to Q \times \Gamma^*$ represents the actions of the transducer. Specifically on input $(X_1,\ldots,X_n)$ with $X_i \in \Sigma$ the output of the transducer is $Y_1 \circ \cdots \circ Y_n$ where $Y_i \in \Gamma^*$ are derived by setting $(q_i,Y_i) = \delta(q_{i-1},X_i)$ inductively for $i \in [n]$. 
\end{definition}

\begin{definition}[Finite State Compressors]
Algorithm A is a finite state compressor if there exists an integer $S$ such that for every
$n$ there exists a $S$-state transducer $T_n$ such that for every string $X\in \Sigma^n$ we have $A(X) = T_n(X)$. 
\end{definition}

Note that ``finite state compressors'' are not necessarily constructive since the transducer is allowed to depend arbitrarily on the length of the string being compressed. Nevertheless this notion turns out to be very useful. We 
\begin{proposition}\label{prop:ih-fsc}
For every integer $L$, $\IH_{L,\calm}$ Algorithm is a finite state compressor.
\end{proposition}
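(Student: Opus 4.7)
The plan is to exhibit an explicit transducer $T$ (the same for every input length $n$) whose state space has size depending only on $L$ and $|\Sigma|$, and whose output on any input $X_0,\ldots,X_{n-1}$ with $n$ divisible by $L$ coincides with $\IH_{L,\calm}(X_0,\ldots,X_{n-1})$. Since the transducer does not depend on $n$, this is strictly stronger than required and immediately yields the proposition.

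The key observation is that $\IH_{L,\calm}$ is a block code: it only needs to buffer the last partial block of fewer than $L$ characters, and as soon as a block of exactly $L$ characters has been accumulated, it can emit the Huffman codeword and start over. Accordingly I would take $Q = \Sigma^{<L} \defeq \bigcup_{i=0}^{L-1} \Sigma^i$, initial state $q_0 = \lambda$, input alphabet $\Sigma$, output alphabet $\Gamma = \{0,1\}$, and define the transition $\delta : Q \times \Sigma \to Q \times \Gamma^*$ by
\[
\delta(q,X) \;=\; \begin{cases} (q \circ X,\ \lambda) & \text{if } |q| < L-1, \\ (\lambda,\ \Huff(q \circ X)) & \text{if } |q| = L-1. \end{cases}
\]
The crucial features are that $\Huff$ is a fixed finite object determined by $\calm$ and $L$, so it can be hard-coded into $\delta$, and that $|Q| = \sum_{i=0}^{L-1} |\Sigma|^i$ depends only on $L$ and $|\Sigma|$.

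Next I would verify by a straightforward induction on the block index that after the transducer has consumed $jL$ input symbols it is in state $\lambda$ and has emitted exactly $\Huff(Y_0)\circ\cdots\circ\Huff(Y_{j-1})$, where $Y_0,\ldots,Y_{j-1}$ are the first $j$ blocks of length $L$. Taking $j = n/L$ recovers $\IH_{L,\calm}(X_0,\ldots,X_{n-1})$ on the nose. Setting $T_n = T$ and $S = |Q|$ in the definition of finite state compressor completes the proof.

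There is essentially no genuine obstacle here; the only thing to notice is that although $\Huff$ was defined as a function on $\Sigma^L$, the transducer model emits an output string at every input step, and the construction above naturally emits the empty string on the first $L-1$ symbols of each block and the full Huffman codeword on the $L$th symbol. One small thing to be careful about is the boundary convention when $n$ is not a multiple of $L$, but the definition of $\IH_{L,\calm}$ already assumes divisibility, so the construction above handles exactly the inputs it needs to.
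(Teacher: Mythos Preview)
Your proof is correct and takes essentially the same approach as the paper: the paper describes the states as nodes of a $|\Sigma|$-ary tree of depth $L$ recording the symbols seen so far in the current block, with output $\Huff(\gamma)$ emitted at the leaf for $\gamma$ followed by a return to the root, which is exactly your construction with $Q=\Sigma^{<L}$ written out more explicitly. Your observation that the same transducer works for every $n$ (so no dependence on $n$ is needed) is a harmless strengthening.
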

\begin{proof}
Given the length $n$ of the string to be compressed, recall that $\IH_{L,\calm}$ uses the expected frequency vector $\{P_L(\gamma)\}_{\gamma \in \Sigma^L}$ to produce a Huffman coder  $\Huff:\Sigma^L \to \{0,1\}^*$ corresponding to this frequency vector. The output of $\IH_{L,\calm}(X)$ is the just the repeated (iterated) application of the Huffman code to the blocks of are the partition of $X$ into length $L$ sequences. 

The finite state compressor captures this function $\Huff$ using $O(|\Sigma^L|)$-states. These states correspond to nodes of a $|\Sigma|$-ary tree of depth $L$, which record the symbols seen in the current block. At the leaf corresponding to the block $\gamma$, the finite machine outputs $\Huff(\gamma)$ and returns to the root of the tree (to process the next block). 
\end{proof}

\begin{lemma}\label{lem:fsc-comp}
Lempel-Ziv is competitive against finite state compressors. Specifically if $A$ is an finite state compressor then for every $X \in \Sigma^n$, $|\LZ(X)| \leq |A(x)|\cdot (1 + o_n(1))$. 
\end{lemma}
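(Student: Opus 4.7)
The plan is to sandwich both $|\LZ(X)|$ and $|A(X)|$ by expressions centered on $m\log m$, where $m$ is the number of Lempel--Ziv phrases of $X$, so that the dominant $m\log m$ term cancels and only lower-order slack remains. Crucially, $m$ must grow with $n$: the recursion $\sigma_i = \sigma_{j_i}\circ b_i$ with $j_i < i$ forces $|\sigma_i|\leq i$ by induction, hence $n=\sum_i|\sigma_i|\leq m(m+1)/2$, so $m=\Omega(\sqrt{n})$ and $\log m = \Theta(\log n)\to\infty$.

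For the upper bound I would sum the per-phrase cost $|[j_i]|+|\mbin(b_i)|\leq \log i+2\log\log i+\lceil\log(|\Sigma|+1)\rceil$ and use $\sum_{i\leq m}\log i = \log(m!) \leq m\log m$ to obtain
$$|\LZ(X)| \;\leq\; m\log m + O(m\log\log m).$$
For the lower bound I would assume (as is necessary for the statement to hold) that the transducer $T=T_n$ is \emph{information-lossless}: the input is recoverable from the output together with the sequence of states visited. Run $T$ on $X=\sigma_1\sigma_2\cdots\sigma_m$ (ignoring $\sigma_0=\lambda$); let $q_{i-1}$ and $q_i$ denote the states of $T$ immediately before and after consuming $\sigma_i$, and let $w_i\in\Gamma^*$ be the corresponding output segment. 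Partition the $m$ phrase indices into at most $S^2$ classes indexed by the pair $(q_{i-1},q_i)$, with class sizes $m_{q,q'}$. Within a single class the inputs $\sigma_i$ are distinct and share a common start/end state pair, so information losslessness forces the outputs $w_i$ in that class to be distinct binary strings. Since any $k$ distinct binary strings have total length at least $\sum_{j=1}^k \lfloor\log_2 j\rfloor \geq k\log k - O(k)$, summing over classes and applying convexity of $x\log x$ to get $\sum_{q,q'} m_{q,q'}\log m_{q,q'}\geq m\log(m/S^2)$ yields
$$|A(X)| \;=\; \sum_i |w_i| \;\geq\; m\log m - 2m\log S - O(m).$$

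Dividing the two estimates gives $|\LZ(X)|/|A(X)| \leq 1 + O\!\bigl((\log\log m+\log S)/\log m\bigr)$, which is $1+o_n(1)$ since $S$ is a fixed constant and $\log m\to\infty$. The main obstacle is precisely the distinct-outputs step: without information losslessness, two distinct phrases routed through the same $(q,q')$ pair could produce identical output segments, collapsing the pigeonhole and killing the lower bound. This assumption is essentially free (a compressor that cannot be inverted does not deserve the name), but it must be made explicit in the proof. The remaining pieces --- the per-phrase encoding bookkeeping, the $k\log k$ lower bound on totals of distinct binary strings, and the Jensen step --- are entirely routine.
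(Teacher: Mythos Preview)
Your proof is correct and follows essentially the same argument as the paper: upper-bound $|\LZ(X)|$ by the per-phrase encoding cost, lower-bound $|A(X)|$ by partitioning the phrases according to the transducer's start/end state pair and invoking distinctness of outputs plus Jensen, then divide. The one cosmetic difference is that the paper routes both bounds through an intermediate ``complexity'' $C(X)$ (the maximum size of any factorization of $X$ into distinct strings) rather than the LZ phrase count $m$ directly, but since the LZ phrases themselves form such a factorization the two versions coincide for this lemma; your explicit information-losslessness hypothesis is exactly what the paper uses implicitly when it argues that swapping two same-state-pair phrases would make $T_n$ non-injective.
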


Before proving the lemma above we introduce a final, and key, quantity that identifies the ``complexity'' of a string. This complexity will give an upper bound on the Lempel-Ziv coding length and a lower bound on the compression length under any finite state compressor.

\begin{definition}\label{def:comp}
For a string $X \in \Sigma^n$ we define its complexity $C(X)$ to be the largest integer $t$ such that there exist $t$ distinct strings $Y_1,\ldots,Y_t \in \Sigma^*$ such that $X = Y_1 \circ Y_2 \circ \cdots \circ Y_t$, i.e., $X$ can be written as concatenation of $t$ distinct strings.
\end{definition}

It is easy to give a (seemingly crude) upper bound on the length of the Lempel-Ziv coding of a string in terms of its complexity.

\begin{proposition}\label{prop:fsc-up}
For every string $X \in \Sigma^n$ with complexity $C(X) = t$, its length under the Lempel-Ziv coding is at most $t \log t + O(t \log \log t) = t \log t \cdot (1 +o_n(1))$.
\end{proposition}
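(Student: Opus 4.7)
The plan is to bound $|\LZ(X)|$ directly in terms of the number of phrases $m$ produced by the Lempel-Ziv parser, then argue $m \leq t + O(1)$, and finally substitute. The first step is essentially accounting: each phrase $\sigma_i$ contributes $|[i]| + |\mbin(b_i)| \leq \log i + 2\log\log i + \ell$ bits to $\LZ(X)$, where $\ell = \lceil \log(|\Sigma|+1)\rceil$ is a constant depending only on $\Sigma$. Summing over $i \in \{1,\ldots,m\}$ and using $\sum_{i=1}^m \log i \leq m \log m$ gives $|\LZ(X)| \leq m \log m + O(m \log\log m)$.

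For the second step, observe that the Lempel-Ziv parsing itself is almost a decomposition of $X$ into distinct pieces: property (2a) guarantees that $\sigma_1, \ldots, \sigma_{m-1}$ are pairwise distinct and non-empty, and only $\sigma_m$ can possibly fail to be distinct from the earlier phrases. I would split into cases based on $\sigma_m$. If $\sigma_m = \lambda$, then $X = \sigma_1 \cdots \sigma_{m-1}$ is already a decomposition into $m-1$ distinct strings; if $\sigma_m$ is distinct from all earlier phrases and non-empty, then $X = \sigma_1 \cdots \sigma_m$ is a decomposition into $m$ distinct strings; so in both cases $C(X) \geq m-1$. In the remaining case, where $\sigma_m$ coincides with some earlier $\sigma_j$, I would absorb $\sigma_m$ into the preceding phrase to form $X = \sigma_1 \cdots \sigma_{m-2} (\sigma_{m-1} \sigma_m)$, and argue that these $m - 1$ parts are again pairwise distinct, using the one-character-extension structure of the LZ phrases to rule out $\sigma_{m-1} \sigma_m = \sigma_i$ for $i \leq m-2$. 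In all cases we obtain $m \leq t + 1$.

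Substituting $m \leq t + 1$ into the Step 1 bound yields
$$|\LZ(X)| \leq (t+1) \log(t+1) + O\bigl((t+1) \log \log (t+1)\bigr) = t \log t + O(t \log\log t),$$
as desired. The main obstacle is the distinctness verification in the duplicate-$\sigma_m$ case: this is the only place the argument peeks inside the LZ parser instead of simply invoking (2a), and one must check that the merged phrase $\sigma_{m-1} \sigma_m$ does not accidentally reproduce some $\sigma_i$ with $i \leq m - 2$. Everything else is essentially sum manipulation, using the efficient prefix-free integer encoding $|[i]| \leq \log i + 2 \log\log i$ that was fixed at the outset.
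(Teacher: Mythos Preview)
Your approach is essentially the paper's: bound the encoding length by $m(\log m + 2\log\log m + O(1))$, then compare $m$ to the complexity $t$. You are actually more careful than the paper on the second point---the paper simply asserts ``the $\sigma_j$'s are distinct, so $m\le t$'' without treating the possibility that $\sigma_m$ repeats an earlier phrase, whereas your case split (and the merge $\sigma_{m-1}\sigma_m$ in the duplicate case) handles this cleanly. Your worry about that merge is easily resolved: if $\sigma_{m-1}\sigma_m=\sigma_i$ for some $i\le m-2$, then $\sigma_{m-1}$ is a proper prefix of $\sigma_i$, and since the phrase set $\{\sigma_0,\ldots,\sigma_{m-2}\}$ is prefix-closed (each $\sigma_k$ minus its last character is an earlier phrase), $\sigma_{m-1}$ would itself lie in $\{\sigma_0,\ldots,\sigma_{m-2}\}$, contradicting (2a).

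There is one genuine omission. The proposition asserts not only $|\LZ(X)|\le t\log t+O(t\log\log t)$ but also that this equals $t\log t\cdot(1+o_n(1))$, with the little-$o$ in $n$, not in $t$. Your argument stops at the first form. To pass to the second you need $t\to\infty$ as $n\to\infty$; the paper supplies this by observing $C(X)\ge\sqrt{n}$ for every $X\in\Sigma^n$ (cut $X$ into pieces of lengths $1,2,\ldots$, which are automatically distinct). Without this step the $o_n(1)$ claim is unjustified, and that form is exactly what is invoked downstream in the proof of Lemma~\ref{lem:fsc-comp}.
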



\begin{proof}
 Note that the Lempel-Ziv algorithm produces a decomposition of the form $\sigma_1 \circ \cdots \circ \sigma_m$ where the $\sigma_j$'s are distinct. So $m \leq t$. The encoding length is at most 
$m \cdot (\log m + 2\log \log m + \log_2 | \Sigma |)$ where the $\log m + 2\log \log m$ is for the prefix free coding of integers from $0$ to $m$ and the the $\log |\Sigma|$ bits are needed to describe an element of $\Sigma$. Thus this length is at most $t \log t + O(t \log \log t)$. 

This expression is already of the form $t\log t\cdot (1 + o_t(1))$. To conclude we only need to show that $t \to \infty$ as $n \to \infty$. In fact it is easy to see that $t \geq \sqrt{n}$. This follows from the fact that every string $X$ can be decomposed into $\sigma_1 \circ \sigma_2 \circ \cdots \sigma_{\sqrt{n}}$ where $|\sigma_i| = i$ for $i < \sqrt{n}$ and $|\sigma_{\sqrt{n}}| \geq \sqrt{n}$. Clearly the $\sigma_i$'s are distinct (since their lengths are all distinct) and so we get a decomposition of $X$ into $\sqrt{n}$ distinct strings establishing $C(X) \geq \sqrt{n}$. 

We thus have that the length of the Lempel-Ziv encoding is $t \log t + O(t\log \log t) = t \log t (1 + o_n(1))$, and the proposition follows.
\end{proof}

The next lemma gives a lower bound on the compression length for finite state compressors in terms of the complexity of the string being compressed.

\begin{lemma}\label{lem:fsc-low}
For every string $X \in \Sigma^n$ with complexity $C(X) = t$ and for every finite state compressor $A$ with $s$ states, the length $|A(X)|$ is at least $t \log t - (3 + 2\log s)\cdot t = t\log t\cdot (1 - o_n(1))$.
\end{lemma}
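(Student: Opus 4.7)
The plan is to mimic the classical Ziv--Lempel argument by tracing how the $s$-state transducer $T=T_n$ realizing $A$ processes an optimal distinct-block decomposition of $X$. Fix a decomposition $X = Y_1 \circ Y_2 \circ \cdots \circ Y_t$ into $t = C(X)$ distinct strings. Let $q_0$ be the initial state of $T$, let $q_i \in Q$ be the state after reading $Y_1 \circ \cdots \circ Y_i$, and let $Z_i$ be the output string emitted while reading $Y_i$, so that $A(X) = Z_1 \circ Z_2 \circ \cdots \circ Z_t$ and $|A(X)| = \sum_i |Z_i|$. I would then group the indices by state-transition pair: for each $(q,q') \in Q \times Q$, set $I_{q,q'} = \{i : q_{i-1} = q,\; q_i = q'\}$ and $t_{q,q'} = |I_{q,q'}|$, so that $\sum_{q,q'} t_{q,q'} = t$ over $s^2$ terms.

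The crux of the argument, and the main place where the finite-state structure enters, is a distinctness claim: within each group $I_{q,q'}$, the output strings $\{Z_i : i \in I_{q,q'}\}$ are pairwise distinct. This relies on information-losslessness of the transducer --- essentially necessary for any invertible compressor --- which ensures that the triple $(q, Z_i, q')$ determines $Y_i$ uniquely; since the $Y_i$'s are globally distinct, so too are the $Z_i$'s within each state-pair group. Granted this, an elementary count bounds the total length of any family of $N$ distinct binary strings by $N \log N - 2N$ from below, because the number of binary strings of length at most $k$ is only $2^{k+1}-1$. Applying this group-by-group,
\[
|A(X)| \;=\; \sum_{q,q'} \sum_{i \in I_{q,q'}} |Z_i| \;\geq\; \sum_{q,q'} \bigl(t_{q,q'}\log t_{q,q'} - 2 t_{q,q'}\bigr) \;=\; \sum_{q,q'} t_{q,q'}\log t_{q,q'} \;-\; 2t.
\]

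To finish, I would lower-bound $\sum_{q,q'} t_{q,q'} \log t_{q,q'}$ using convexity of $x \mapsto x\log x$ (equivalently, the log-sum inequality) applied to the $s^2$ nonnegative numbers $t_{q,q'}$ summing to $t$, obtaining $\sum_{q,q'} t_{q,q'}\log t_{q,q'} \geq t\log(t/s^2) = t\log t - 2t\log s$. Combining gives $|A(X)| \geq t\log t - (2 + 2\log s)\,t$, which implies the stated bound $t\log t - (3 + 2\log s)\,t$. Since $t \geq \sqrt{n} \to \infty$ by Proposition~\ref{prop:fsc-up}, the lower-order correction is $o(t\log t)$, which yields the $(1-o_n(1))$ form. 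The only genuinely non-routine step is the within-group distinctness of the $Z_i$'s, i.e.\ the information-losslessness property; everything else reduces to convexity and counting short binary strings.
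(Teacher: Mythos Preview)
Your proposal is correct and follows essentially the same route as the paper: decompose $X$ into $t$ distinct blocks, group blocks by their (start state, end state) pair, argue that the outputs within each group must be distinct (the paper makes this explicit via a swapping argument --- if $Z_i=Z_j$ for $i,j$ in the same group, exchanging $Y_i$ and $Y_j$ yields a different input with the same output), lower-bound the total output length per group by a short-strings count, and finish with convexity of $x\log x$. Your constant $2$ is even slightly sharper than the paper's $3$, but otherwise the arguments coincide step for step.
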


\begin{proof}
Let $S$ denote the states of the finite state compressor $E$. We use $E$ to denote the finite state machine doing the compression. Without loss of generality we assume the outputs of $E$ occur on the transitions (and not the states). 

Let $X = Y_1 \circ Y_2 \circ \cdots \circ Y_t$ with $Y_i$'s being distinct. Note that as $E$ compresses $X$, it encounters the string $Y_i$ at some state and compresses this part. Let $a(i)$ denote the state it starts in when parsing $Y_i$ and let $b(i)$ denote the state it ends at after parsing $Y_i$. Let $Z_i \in \{0,1\}^*$ be the output of $E$ during this phase. 

We partition $[t]$ into sets $\{\Pi_{ab}\}_{a,b \in S}$ as follows: we let
$\Pi_{ab} = \{i | a(i) = a \mbox{ and } b(i) = b\}$. The key to our analysis is the following claim.

\begin{claim}
Let $i,j \in \Pi_{a,b}$ for some $a,b$. Then $Z_i \ne Z_j$. 
\end{claim}

\begin{proof}
This is easy to see. Suppose $Z_i = Z_j$. Now consider the string $X'$ which looks like $X$ except the positions of $Y_i$ and $Y_j$ are flipped, i.e., $X' = Y'_1 \cdots Y'_t$ where $Y'_\ell = Y_\ell$ for $\ell \not\in \{i,j\}$ and $Y'_i = Y_j$ and $Y'_j = Y_i$. 
Then the compression $T_n(X') = T_n(X)$ which rules out correct decompression. 
\end{proof}

Now we are essentially done, modulo some calculations. First let $t_{ab} = |\Pi_{a,b}|$. We first argue that $\sum_{i \in \Pi_{a,b}} |Z_i| \geq t_{ab} \cdot (\log_2 t_{a,b} - 3)$. To see this we use the fact that the $Z_i$'s must be distinct for $i \in \Pi_{a,b}$ and there are at most $2^i$ distinct binary strings of length $i$. Letting $t_{ab} = 2^k + m$ where $0 \leq m < 2^k$ we have $$\sum_{i \in \Pi_{a,b}} |Z_i| \geq \left(\sum_{j=0}^{k-1} j \cdot 2^j\right) + k \cdot (m+1).$$ Using the simplification $\sum_{j=0}^{k-1} j \cdot 2^j = (k-2) \cdot 2^{k} + 1$, we now get $$\sum_{i \in \Pi_{a,b}} |Z_i| \geq (k-2) 2^k + k (m+1) \geq (k-2) t_{ab} \geq (\log t_{ab} - 3) \cdot t_{ab}.$$

Finally to get a bound in terms of $t$ we use the fact that $t = \sum_{a,b \in S} t_{ab}$. Convexity of $x\log x$ function and Jensen's inequality now imply that
$\sum_{a,b} (\log t_{ab} - 3) \cdot t_{ab}$ is lower bounded by $$\log (\sum_{a,b} \frac{t_{a,b}}{|S|^2}) - 3 \sum_{a,b} t_{a,b} = t (\log (\frac{t}{|S|^2}) - 3).$$ 
\end{proof}

\begin{proof}[Proof of Lemma~\ref{lem:fsc-comp}]
By Proposition~\ref{lem:fsc-comp} we have that $|\LZ(X)| \leq t\log t \cdot (1 + o_n(1))$ where $t = C(X)$ is the complexity, as in Definition~\ref{def:comp} of the string $X$. 
By Lemma~\ref{lem:fsc-low}, we have $|A(x)| \geq t\log t \cdot (1 - o_n(1))$ or equivalently $t\log t = |A(X)| \cdot (1 + o_n(1))$. We conclude that $|\LZ(X)| \leq |A(X)|\cdot (1 + o_n(1))$. 
\end{proof}

\begin{proof}[Proof of Theorem~\ref{thm:main}]
Fix a Markov model $\calm$. By Lemma~\ref{lem:conc} we have that the Iterated Huffman $\IH_{L,\calm}$ algorithm is an $(\epsilon/3,\delta)$-good compressor for sufficient large $L$, i.e., for all large enough $n$ we have that $\Pr_{X}[|\IH_{L,\calm}(X)| \geq H(\calm)(1+\epsilon/3)n] \leq \delta$ for $X \in \Sigma^n$ drawn from $\calm$. 

By Proposition~\ref{prop:ih-fsc} we have that $\IH_{L,\calm}$ is a finite state compressor. And thus by Lemma~\ref{lem:fsc-comp}, we have for every $X$, $|\LZ(X)| \leq |\IH_{L,\calm}(X)| \cdot (1 + o_n(1))$. In particular for large enough $n$ we have $|\LZ(X)| \leq |IH_{L,\calm}(X)| \cdot (1+\epsilon/3)$.

Combining the two inequalities above we have that with probability at least $1 - \delta$, we have 
$$|\LZ(X)| \leq |\IH_{L,\calm}(X)| \cdot (1+\epsilon/3) \leq H(\calm)(1 + \epsilon/3)^2 n \leq H(\calm) (1+ \epsilon)n,$$ (where the last inequality uses $\epsilon \leq 1$) 
thus yielding the theorem.
\end{proof}


\end{document}